\begin{document}

\mainmatter 

\title{Deciding structural liveness of Petri nets}

\author{Petr Jan\v{c}ar
}

\institute{Dept Comp. Sci., FEI, Techn. Univ. Ostrava
	(V\v{S}B-TUO),\\
	17. listopadu 15, 70833 Ostrava, Czech Rep.\\
\mailpj
}

\maketitle

\begin{abstract}
Place/transition Petri nets are a standard model for a class of
distributed systems whose reachability spaces might be infinite. One
of well-studied topics is the verification of safety and liveness
properties in this model; despite the extensive research effort, some
basic problems remain open, which is exemplified by the open
complexity status of the reachability problem. The liveness problems
are known to be closely related to the reachability problem, and many
structural properties of nets that are related to liveness have been studied. 

Somewhat surprisingly, the decidability status of the problem if a net is structurally live, i.e. if there is an initial marking for which it is live, has remained open, as also a recent paper (Best and Esparza, 2016) emphasizes. Here we show that the structural liveness problem for Petri nets is decidable. 

A crucial ingredient of the proof is the result by Leroux (LiCS 2013)
showing that we can compute a finite (Presburger) description of the
reachability set 
for a marked Petri net if this set is semilinear.
\end{abstract}

\section{Introduction}\label{sec:intro}

Petri nets are a standard tool for modeling and analysing a class of
distributed systems; we can name~\cite{reisig2013} as a recent
introductory monograph for this area.
A~natural part of the analysis of such systems is checking the 
safety and/or liveness properties, where the question of
deadlock-freeness is just one example.

The classical version of place/transition Petri nets 
(exemplified by Fig.~\ref{fig:expnet})
is used to model systems with potentially infinite state spaces; here 
the decidability and/or complexity questions for respective analysis problems 
are often intricate.
E.g., despite several decades of
research the complexity status 
of the basic problem of \emph{reachability} (can the system
get from one given configuration to another?) 
remains unclear;
we know  that the problem is $\EXPSPACE$-hard due to
a classical construction by 
Lipton (see, e.g.,~\cite{DBLP:conf/ac/Esparza96}) but the known upper
complexity bounds are not primitive recursive 
(we can refer to~\cite{DBLP:conf/lics/LerouxS15} 
and the references therein for further information).

The \emph{liveness} of a transition (modelling a system action)
is a related problem; its complementary problem
asks if for a given initial marking (modelling an initial system
configuration) the net enables to reach  
a marking 
in which the transition is dead, in the sense that it can be 
never performed in the future.
A~marked net $(N,M_0)$, i.e. a net $N$ with an initial marking $M_0$,
is live if all its transitions are live.

The close relationship of the problems of reachability and liveness 
has been clear since the early works by
Hack~\cite{DBLP:conf/focs/Hack74,DBLP:books/garland/Hack75}. 
Nevertheless, the situation  is different for the 
problem of \emph{structural} liveness that asks,
given a net $N$, if there is a marking
$M_0$ such that $(N,M_0)$ is live.
Though semidecidability of structural liveness is clear from the above
facts, the decidability question 
has been open so far: see, e.g., the overview~\cite{wimmel2008} and
in particular the recent paper  
\cite{DBLP:journals/ipl/BestE16} where this problem (STLP) is
discussed in the Concluding Remarks section.

Here we show the decidability of structural liveness, by showing the
semidecidability of the complementary problem.
The idea is to construct, for a given net $N$, a marked net $(N',M'_0)$ 
(partly sketched in Fig.~\ref{fig:decidpslp})
that works in two
phases (controlled by additional places): in the first phase, 
an arbitrary marking $M$ from the set $\calD$ of markings with at least
one dead transition is generated, and then 
$N$ is simulated in the reverse mode from $M$.
If $N$ is not
structurally live, then the projection of the
reachability set of $(N',M'_0)$ to the set $P$ of places of $N$ 
is the whole set $\Nat^P$; 
if $N$ is
structurally live, then there is $M\in\Nat^P$ such that 
the projection of any marking reachable from $M'_0$ differs from $M$.

In the first case (with the whole set $\Nat^P$) the reachability set of 
$(N',M'_0)$ is surely semilinear, i.e. Presburger definable. 
Due to a result by
Leroux~\cite{DBLP:conf/lics/Leroux13},
there is an algorithm that
finishes with a Presburger description of the reachability set of 
$(N',M'_0)$ when it is semilinear (while it can go forever when not). 
This yields the announced semidecidability.

The construction of the above mentioned (downward closed) set $\calD$ is
standard; the crucial ingredient of our proof is 
the mentioned result by Leroux. 
Another ingredient is the decidability of reachability;
nevertheless
it is not clear that the
reachability reduces to the structural liveness, and the complexity of
the structural liveness problem is left open for future research. 

Section~\ref{sec:basicdef} provides the necessary formal background,
and Section~\ref{sec:SLPdecid} shows the decidability result. 
In Section~\ref{sec:AddRem} a few comments are added, and 
in particular an
example of a net is given where the set of live markings is not semilinear.

\section{Basic definitions}\label{sec:basicdef}

By $\Nat$ we denote the set $\{0,1,2,\dots\}$. For a set $A$, by
$A^*$ we denote the set of finite sequences of elements of $A$, and 
$\varepsilon$ denotes the empty sequence.

\subsubsection*{Nets.}
A \emph{Petri net}, or just a \emph{net} for short, is a tuple
$N=(P,T,W)$ where 
$P$ and $T$ are two disjoint finite sets of \emph{places} and \emph{transitions},
respectively, and $W:(P\times T)\cup(T\times P)\rightarrow \Nat$ is
the \emph{weighted flow function}.
A \emph{marking} $M$ of $N$ is an element of $\Nat^P$,  a
mapping from $P$ to $\Nat$, often viewed also as a vector with $|P|$
components. 

\begin{figure}[ht]
\includegraphics[scale=0.5]{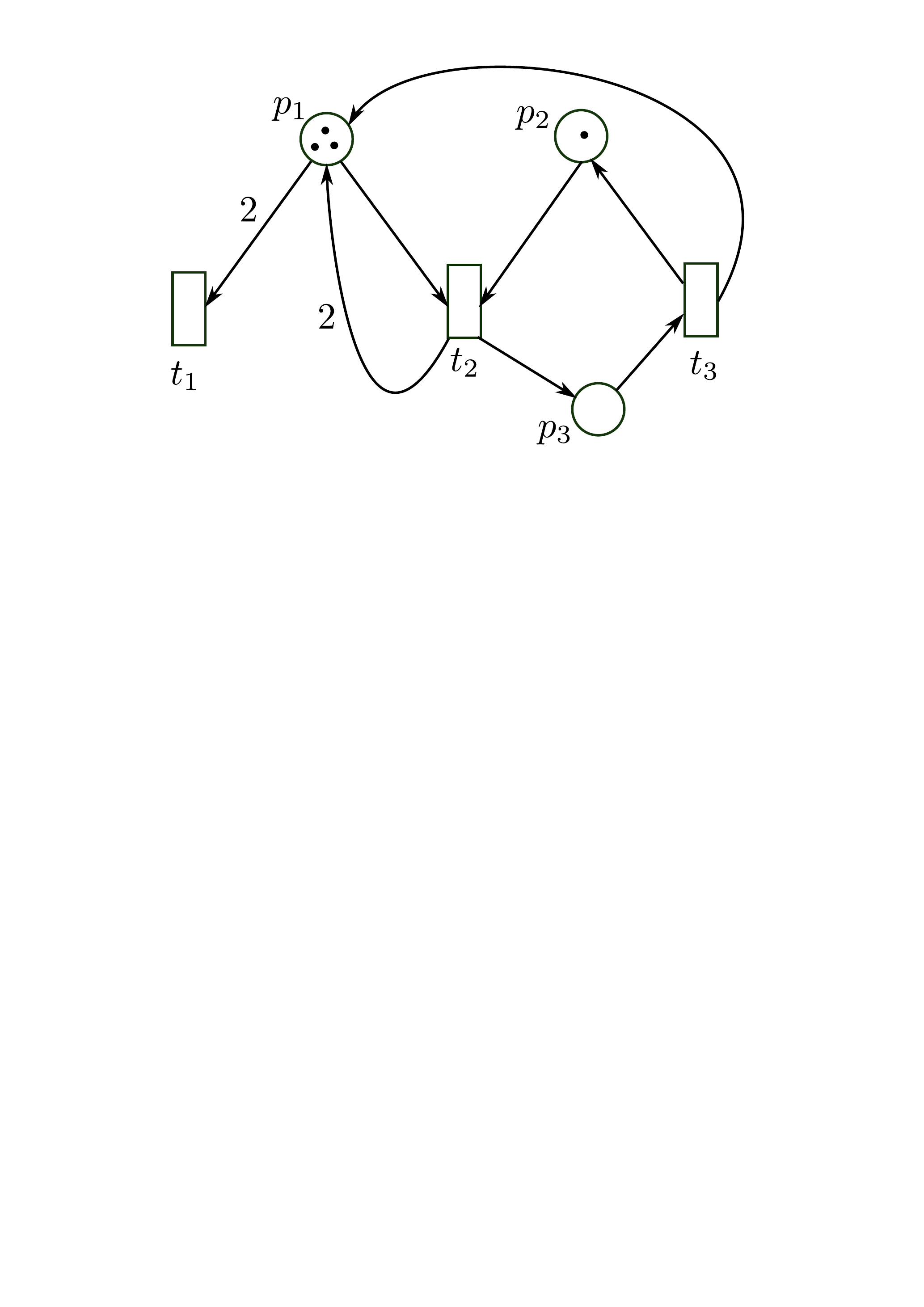}
\vspace{-30em}
\caption{Example of a net $N=(P,T,W)$, with marking $M=(3,1,0)$}\label{fig:expnet}
\end{figure}

Fig.~\ref{fig:expnet} presents a net
$N=(\{p_1,p_2,p_3\},\{t_1,t_2,t_3\},W)$ where
$W(p_1,t_1)=2$, $W(p_1,t_2)=1$, $W(p_1,t_3)=0$, etc.; 
we do not draw an arc from $x$ to $y$ when $W(x,y)=0$, and we assume 
$W(x,y)=1$ for the arcs $(x,y)$ with no depicted numbers.
Fig.~\ref{fig:expnet} also depicts a marking $M$ by using black tokens,
namely $M=(3,1,0)$,
assuming the ordering $(p_1,p_2,p_3)$ of places.

\subsubsection*{Reachability.}
Assuming a net $N=(P,T,W)$,
for each $t\in T$ we define the following relation
$\trans{t}$ on $\Nat^P$:
\begin{center}
	$M\trans{t}M'$ $\iffdef$ $\forall p\in P: M(p)\geq W(p,t)\land
	M'(p)=M(p)-W(p,t)+W(t,p)$.
\end{center}
By $M\trans{t}$ we denote that $t$ is \emph{enabled in} $M$, i.e. that
there is $M'$ such that $M\trans{t}M'$.
The relations $\trans{t}$ are inductively
extended to $\trans{u}$ for all $u\in T^*$: 
$M\trans{\varepsilon}M$;
if $M\trans{t}M'$ and $M'\trans{u}M''$, then $M\trans{tu}M''$.
The \emph{reachability set for a marking} $M$
is the set 
\begin{center}
$\rset{M}=\{M'\mid M\trans{u}M'$ for some $u\in T^*\}$.
\end{center}
For the net of Fig.~\ref{fig:expnet} we have, e.g.,
$(3,1,0)\trans{t_2}(4,0,1)\trans{t_1}(2,0,1)\trans{t_1}(0,0,1)\trans{t_3}(1,1,0)$;
we can check that
the reachability set for $(3,1,0)$ is
\begin{equation}\label{eq:rsetoddeven}
	\{\,(x,1,0)\mid x \textnormal{ is odd }\}\cup
	\{\,(y,0,1)\mid y \textnormal{ is even }\}.
\end{equation}

\subsubsection*{Liveness.} For a net $N=(P,T,W)$,
a \emph{transition} $t$ is \emph{dead in} a \emph{marking} $M$ 
if there is no $M'\in\rset{M}$ such that $M'\trans{t}$. 
(Such $t$ can be never performed in $N$ when we start from $M$.)

A \emph{transition} $t$ is \emph{live in} $M_0$ if 
there is no $M\in\rset{M_0}$ such that $t$ is dead in $M$.
(Hence for each
$M\in\rset{M_0}$ there is $M'\in\rset{M}$ such that $M'\trans{t}$.)
A~\emph{set $T'$ of transitions} is  
\emph{live in} $M_0$ if each $t\in T'$ is live in  $M_0$.
(Another natural definition of liveness of a set $T'$ is discussed in
Section~\ref{sec:AddRem}.)

A \emph{marked net} is a pair $(N,M_0)$ where  
$N=(P,T,W)$ is a net 
and $M_0$ is a marking, called \emph{the initial marking}.
A \emph{marked net} $(N,M_0)$
is \emph{live} if each transition (in other words, the set $T$)
is live in $M_0$ (in the net $N$).
A \emph{net} $N$ is \emph{structurally live} if there is $M_0$ such
that  $(N,M_0)$ is live.

E.g., the net in Fig.~\ref{fig:expnet} is structurally live since it is
live for the marking $(3,1,0)$, as can be easily checked
by inspecting the transitions enabled in the elements of the reachability set~(\ref{eq:rsetoddeven}).
We can also note that the net is not live for $(4,1,0)$,
we even have that no transition is live in  $(4,1,0)$,
since  $(4,1,0)\trans{t_1t_1}(0,1,0)$ where all transitions
are dead.

\subsubsection*{Liveness decision problems.}
\begin{itemize}
	\item
The \emph{partial liveness problem}, denoted $\plp$,
asks, given a marked net $(N,M_0)$ and a set $T'$ of its transitions,
if $T'$ is live in $M_0$.
	\item
The \emph{liveness problem}, denoted $\lp$, is a special case of
$\plp$: it
asks, given a marked net $(N,M_0)$,
if  $(N,M_0)$ is live (i.e., if all its transitions are live in
$M_0$).
	\item		
The \emph{partial structural liveness problem}, denoted $\pslp$,
asks, given a net $N$ and a set $T'$ of its transitions, if there is
$M$ in which $T'$ is live. 
\item
The \emph{structural liveness problem}, denoted $\slp$,
is a special case of $\pslp$: 
it asks, given a net $N$, if there is
$M$ such that $(N,M)$ is live. 
\end{itemize}

\section{Structural liveness of nets is decidable}\label{sec:SLPdecid}

We aim to show 
the decidability of $\pslp$, and thus also of  $\slp$:

\begin{theorem}\label{th:decidSLP}
	The partial structural liveness problem ($\pslp$) is decidable.
\end{theorem}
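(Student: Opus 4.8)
The plan is to prove the theorem by establishing the semidecidability of the complementary problem, namely that $T'$ is \emph{not} structurally live, and combining this with the known semidecidability of structural liveness itself (which follows from decidability of reachability and liveness). First I would make precise the set $\calD$ of ``bad'' markings: for a fixed transition $t\in T'$, the set of markings $M$ in which $t$ is dead, i.e. those $M$ from which no marking enabling $t$ is reachable. A marking $M$ fails to be live for $T'$ precisely when some reachable $M''\in\rset{M}$ lies in $\calD_t$ for some $t\in T'$. The key structural observation is that each $\calD_t$ is \emph{upward unreachable} and in fact downward closed in a suitable sense: if $t$ is dead in $M$ and $M'\leq M$ componentwise, then $t$ is still dead in $M'$, because any firing sequence from $M'$ can be matched by one from $M$. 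By Dickson's lemma such a downward-closed complement is determined by finitely many minimal elements, so $\calD=\bigcup_{t\in T'}\calD_t$ admits an effective finite (Presburger) description; its construction is the standard ingredient the introduction refers to.

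Next I would carry out the two-phase reduction sketched in Fig.~\ref{fig:decidpslp}. Given $N=(P,T,W)$ and $T'$, I build an auxiliary marked net $(N',M'_0)$ with extra control places enforcing two phases. In phase one the net nondeterministically generates, from $M'_0$, an arbitrary marking $M\in\calD$ on the places $P$ (this is possible because $\calD$ is downward closed and Presburger-definable, so it can be populated by a gadget). In phase two the net simulates $N$ \emph{in reverse}: a reverse transition for $t$ consumes $W(t,p)$ tokens from each $p$ and produces $W(p,t)$, so that $M_1\trans{t}M_2$ in $N$ iff $M_2$ reverse-fires to $M_1$ in the simulation. Consequently the projection to $P$ of $\rset{M'_0}$ in $N'$ is exactly the set of markings $M_0$ from which some element of $\calD$ is reachable in $N$ — that is, the set of markings for which $T'$ is \emph{not} live.

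The decisive dichotomy is then the following. If $N$ is not structurally live for $T'$, every marking $M_0\in\Nat^P$ fails to be live, so the projection of $\rset{M'_0}$ to $P$ is the entire set $\Nat^P$, which is trivially semilinear; hence $\rset{M'_0}$ itself is semilinear. If, on the other hand, $N$ is structurally live for $T'$, then there exists a witnessing $M_0\in\Nat^P$ that is omitted from this projection. In the negative case, by Leroux's theorem~\cite{DBLP:conf/lics/Leroux13} the algorithm that searches for a Presburger description of $\rset{M'_0}$ terminates (since the set is semilinear), and from the resulting Presburger formula I can decide, using decidability of Presburger arithmetic, whether the projection to $P$ equals all of $\Nat^P$. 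This yields a semi-algorithm that halts and reports ``not structurally live'' exactly in the negative case. Running it in parallel with the straightforward semi-algorithm for the positive case — which enumerates candidate markings $M_0$ and uses decidability of the liveness problem $\lp$ to test each — gives a decision procedure for $\pslp$.

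The main obstacle, and the step I expect to require the most care, is verifying the correspondence between the reverse simulation and the backward reachability of $\calD$, together with confirming that the control-place gadget realizes exactly the intended projection without introducing spurious markings on $P$. One must check that partial or interleaved phase-one/phase-two behaviours do not pollute the projection, and that the reverse firing faithfully inverts $\trans{t}$ despite the asymmetry between consuming and producing arcs. The appeal to Leroux's result is clean, but its applicability hinges entirely on the construction guaranteeing semilinearity precisely in the negative case; I would therefore want to argue explicitly that in the negative case $\rset{M'_0}$ projects onto all of $\Nat^P$ \emph{and} is itself semilinear, rather than merely having a semilinear projection, so that the hypothesis of Leroux's theorem is genuinely met.
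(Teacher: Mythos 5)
Your proposal follows essentially the same route as the paper: the same downward-closed set $\calD_{T'}$ of ``dead'' markings computed by a backward (coverability-style) algorithm, the same two-phase net that first generates an element of $\calD_{T'}$ and then simulates $N$ in reverse, the same appeal to Leroux's theorem in the negative case, and the same parallel positive semi-procedure based on decidability of reachability/liveness. The subtlety you flag at the end --- that $\rset{M'_0}$ itself, and not merely its projection to $P$, must be shown semilinear --- is precisely what the paper settles by partitioning $\rset{M'_0}$ according to which control place carries the single token, each part being semilinear.
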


We prove the theorem in the rest of this section.
We first recall the famous decidability result for reachability.
The \emph{reachability problem}, denoted $\rp$, asks 
if $M\in\rset{M_0}$ when given $N,M_0,M$.

\begin{lemma}\label{lem:reachdecid}\cite{DBLP:journals/siamcomp/Mayr84}
The reachability problem ($\rp$) is decidable.
\end{lemma}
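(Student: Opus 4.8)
The plan is to reduce the reachability problem to the equivalent formulation over vector addition systems with states (VASS) --- finite-control counter machines whose nonnegative counters may be tested only for enabledness, never for zero --- which are interreducible with nets for reachability, and then to decide reachability in VASS by the classical decomposition method of Kosaraju, Mayr and Lambert. The central combinatorial object is a \emph{generalized sequence} (a marked graph-transition sequence): a finite alternating chain $G_0\, t_1\, G_1\, t_2\, \cdots\, t_k\, G_k$ of \emph{covering graphs} $G_i$ joined by transitions $t_i$, where each $G_i$ is a strongly connected sub-VASS carrying an $\omega$-marking that records, at its entry and exit, which coordinates are pinned to a fixed finite value and which are allowed to grow without bound. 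Such a sequence $\theta$ represents a whole family of candidate runs, and the single reachability question ``$M\in\rset{M_0}$'' is encoded by one initial generalized sequence.

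First I would attach to every $\theta$ a \emph{characteristic system} of linear constraints over $\Nat$, expressing the counter balance along the chain together with the enabledness and $\omega$-consistency requirements imposed by the covering graphs. By Karp--Miller-style reasoning each $G_i$ contributes a computable semilinear set of realizable loop-effects, so the characteristic system is an effectively presentable Presburger condition. I would then isolate a notion of \emph{perfectness}: $\theta$ is perfect when its characteristic system admits a solution that is simultaneously realizable in every component. The key soundness step is an \emph{iteration (pumping) lemma}: if $\theta$ is perfect then a genuine firing sequence witnessing $M\in\rset{M_0}$ can be extracted, so reachability holds; and perfectness itself is decidable, since testing it amounts to deciding a Presburger formula.

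Next I would establish the \emph{decomposition theorem}: if $\theta$ is not perfect, then some consistency requirement is violated in a localizable way (a coordinate forced to stay bounded that the available loops cannot keep bounded, or an infeasible balance), and $\theta$ can be replaced by a finite set of generalized sequences $\theta_1,\dots,\theta_m$ whose combined run-families coincide with that of $\theta$, each $\theta_j$ being strictly smaller in a suitable ranking. The decision procedure is then the natural fixpoint computation: maintain a finite worklist of generalized sequences; repeatedly remove one and test perfectness; if it is perfect, read off the verdict from its witness; otherwise replace it by its decomposition. Reachability holds iff some generated sequence is perfect and carries a witness.

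The main obstacle, and the technical heart of the whole result, is proving that this decomposition terminates. I would define a ranking on generalized sequences that combines the number of $\omega$-coordinates occurring in the covering graphs with the sizes of the remaining finite data, ordered lexicographically into a well-founded domain, and show that each decomposition step strictly decreases it --- in particular that resolving a boundedness contradiction either pins a previously unbounded coordinate to a finite value (lowering the dominant component) or splits a strongly connected graph into strictly simpler pieces. Verifying that this ranking is genuinely well-founded and that no decomposition step can ever increase it is exactly where the delicate Karp--Miller analysis of $\omega$-markings must be carried out, and it is by far the longest and subtlest part of the argument; once the decomposition framework is in place, soundness via the pumping lemma and decidability of perfectness via Presburger arithmetic are comparatively routine.
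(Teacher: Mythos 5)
The paper does not prove this lemma at all: it is invoked as Mayr's theorem by citation, and your sketch is precisely an outline of that classical Mayr--Kosaraju--Lambert decomposition argument (marked graph-transition sequences, characteristic Presburger systems, perfectness with an iteration/pumping lemma, and termination of the decomposition via a well-founded ranking on $\omega$-coordinates). So your proposal takes essentially the same route as the paper's cited source and identifies the correct key ingredients, including the fact that well-foundedness of the decomposition is the technical heart of the proof.
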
	

In Petri net theory this is a fundamental theorem; we call it a
``lemma'' here, since it is one ingredient used in proving the theorem
 of this paper (i.e. Theorem~\ref{th:decidSLP}).
 The first proof of Lemma~\ref{lem:reachdecid} was given by E.~W.~Mayr 
(see~\cite{DBLP:journals/siamcomp/Mayr84} for a journal publication),
and there is a row of further papers dealing with this problem;
we can refer to a recent
paper~\cite{DBLP:conf/lics/LerouxS15} and the references therein for
further information.
The
complexity status remains far from clear;
we have $\EXPSPACE$-hardness  due to a classical construction by 
Lipton (see, e.g.,~\cite{DBLP:conf/ac/Esparza96}) but the known upper
bounds are not primitive recursive.

There are long known, and straightforward, effective reductions among the
reachability problem $\rp$ and the (partial) liveness problems ($\plp$
and $\lp$); we can find them already in Hack's works 
from 1970s~\cite{DBLP:conf/focs/Hack74,DBLP:books/garland/Hack75}.
This induces semidecidability of the partial structural liveness
problem ($\pslp$). Hence the main issue is
to establish the semidecidability of the complementary problem of
$\pslp$; roughly speaking, we need to find a finite witness when
$(N,M)$ is non-live for all $M$.

We further assume a fixed net $N=(P,T,W)$ if not said otherwise.

\subsubsection*{Sets of ``dead'' markings are downward closed.}
A natural first step for studying (partial) liveness is to explore 
the sets
\begin{center}
$\calD_{T'}=\{M\in \Nat^P\mid $ some $t\in T'$ is dead in $M\}$
\end{center}
for $T'\subseteq T$. We note that the definition entails
$\calD_{T'}=\bigcup_{t\in T'}\calD_{\{t\}}$.
E.g., in the net of Fig.~\ref{fig:expnet} we have
$\calD_{\{t_1\}}= \{(x,0,0)\mid x\leq 1\}\cup\{(0,x,0)\mid x\in\Nat\}$,
$\calD_{\{t_2,t_3\}}=\{(x,0,0)\mid x\in\Nat\}$, 
 and
\begin{equation}\label{eq:exampleDT}
\calD_T=\{(0,x,0)\mid x\in\Nat\}\cup \{(x,0,0)\mid x\in\Nat\}.
\end{equation}
Due to the monotonicity of Petri nets 
(by which we mean 
that $M\trans{u}M'$ implies $M{+}\delta\trans{u}M'{+}\delta$
for all
$\delta\in\Nat^P$), each $\calD_{T'}$ is obviously downward closed.
We say that $\calD\subseteq\Nat^P$ is \emph{downward closed} 
if $M\in \calD$ implies $M'\in \calD$ for all 
$M'\leq M$, where 
we refer to 
the component-wise order:
\begin{center}
$M'\leq M \iffdef \forall p\in P:M'(p)\leq M(p)$.
\end{center}
It is standard to characterize any downward closed subset $\calD$ of $\Nat^P$ 
by the set of its 
 maximal elements, using the extension 
 $\Natomega=\Nat\cup\{\omega\}$ where $\omega$ stands for an ``arbitrarily
 large number'' satisfying $\omega > n$ for all $n\in\Nat$.
Formally we extend a downward closed set $\calD\subseteq\Nat^P$
to the set 
\begin{center}
$\addlim{\calD}=\calD\cup\{M\in (\Natomega)^P\mid
\forall M'\in\Nat^P: M'\leq M\Rightarrow M'\in\calD\}$.
\end{center}
We thus have 
\begin{center}
$\calD=\{M'\in\Nat^P\mid M'\leq M$ for some 
$M\in \maxelem{\addlim{\calD}}\}$ 
\end{center}
where $\maxelem{\addlim{\calD}}$ is the set of maximal elements of 
$\addlim{\calD}$.
By (the standard extension of) Dickson's Lemma, the set 
$\maxelem{\addlim{D}}$ is finite.
(We can refer, e.g., to \cite{DBLP:journals/corr/abs-1208-4549} where
such completions by ``adding the limits'' are handled in a general
framework.)

E.g., for the set $\calD_T$ in~(\ref{eq:exampleDT}) we have
$\maxelem{\addlim{\calD_T}}=\{(0,\omega,0), (\omega,0,0)\}$.

\begin{proposition}\label{prop:maxconstruct}
Given $N=(P,T,W)$ and $T'\subseteq T$,
	the set $\calD_{T'}$ is downward closed and the finite set 
$\maxelem{\addlim{\calD_{T'}}}$ is effectively constructible.
\end{proposition}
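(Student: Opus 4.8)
The plan is to reduce the statement to a single coverability computation together with routine bookkeeping in the $\omega$-extended order. Downward closedness is immediate from monotonicity, as already noted before the statement: if some $t\in T'$ is dead in $M$ and $M'\le M$, write $M=M'{+}\delta$; then any $M''\in\rset{M'}$ with $M'\trans{u}M''$ yields $M\trans{u}M''{+}\delta$, so $M''{+}\delta$ (and hence the smaller $M''$) fails to enable $t$, showing $t$ is dead in $M'$ and thus $M'\in\calD_{T'}$. For the constructibility part I would first use $\calD_{T'}=\bigcup_{t\in T'}\calD_{\{t\}}$. A downward closed subset of $\Nat^P$ is determined by its finite set of maximal elements in $(\Natomega)^P$, and a marking lies in a finite union $\calD_1\cup\calD_2$ iff it is dominated by a maximal element of $\addlim{\calD_1}$ or of $\addlim{\calD_2}$; hence $\maxelem{\addlim{\calD_{T'}}}$ is obtained by collecting $\bigcup_{t\in T'}\maxelem{\addlim{\calD_{\{t\}}}}$ and deleting the dominated elements. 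So it suffices to construct $\maxelem{\addlim{\calD_{\{t\}}}}$ for a single transition $t$.

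For a fixed $t$ I would let $M_t\in\Nat^P$ be the input requirement $M_t(p)=W(p,t)$, so that a marking enables $t$ exactly when it dominates $M_t$. Then $t$ is dead in $M$ iff no marking reachable from $M$ dominates $M_t$; equivalently, the complement $U_t=\Nat^P\setminus\calD_{\{t\}}$ is precisely the set of markings from which some marking $\ge M_t$ is reachable. By monotonicity $U_t$ is upward closed, so by Dickson's Lemma it has a finite basis of minimal elements.

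The crucial step is to compute this finite basis effectively. Here I would invoke the fact that Petri nets form well-structured transition systems with computable predecessor bases, so the backward-reachability algorithm for coverability (Abdulla et al.; Finkel--Schnoebelen) applies: starting from $\{M_t\}$ and repeatedly adjoining the minimal one-step predecessors of the current upward closed set, the iteration stabilizes after finitely many rounds (again by Dickson's Lemma) and returns the minimal basis $\{b_1,\dots,b_k\}$ of $U_t$. From this basis the desired maximal elements are pure combinatorics in $(\Natomega)^P$: the complement of the upward closure of a single $b_j$ has maximal $\omega$-elements $c_{j,p}$, one for each $p$ with $b_j(p)\ge 1$, given by $c_{j,p}(p)=b_j(p)-1$ and $c_{j,p}(q)=\omega$ for $q\ne p$; since $U_t$ is the union of these upward closures, $\calD_{\{t\}}$ is the intersection of the corresponding downward closed sets, and the intersection of finitely many downward closed sets presented by their maximal $\omega$-elements is computed by taking componentwise minima over all choices and discarding dominated results. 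This produces $\maxelem{\addlim{\calD_{\{t\}}}}$, and the union reduction above finishes the proof.

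The main obstacle, and the only genuinely non-elementary ingredient, is obtaining the \emph{minimal basis} of the coverability set $U_t$ rather than merely deciding membership in it: plain decidability of coverability does not by itself hand one the basis, whereas the backward algorithm does, with termination justified by the well-quasi-ordering. The remaining pieces --- downward closedness, the passage through single transitions, and the complementation in the $\omega$-extended order --- are routine finite computations.
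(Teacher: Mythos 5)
Your proof is correct and follows essentially the same route as the paper: downward closedness from monotonicity, and the backward coverability algorithm (Abdulla et al.\ / Finkel--Schnoebelen, with termination by Dickson's Lemma) to compute the minimal basis of the upward closed complement of $\calD_{\{t\}}$ for each single transition $t$. The only differences are cosmetic --- you combine the per-transition results as a union of downward closed sets rather than as an intersection of the upward closed complements, and you spell out explicitly the $\omega$-completion bookkeeping that the paper dismisses as ``easily derived''.
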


\begin{proof}
The fact that $\calD_{T'}$ is downward closed has been discussed above.
A construction of the finite set $\maxelem{\addlim{\calD_{T'}}}$ 
can be easily derived once we
show that the set $\calS_{T'}=\minelem{\Nat^P\smallsetminus \calD_{T'}}$, i.e. the
set of minimal elements of the (upward closed) complement of $\calD_{T'}$,
is effectively constructible.

For each $t\in T'$,
we first compute 
$\calS_t=\minelem{\Nat^P\smallsetminus \calD_{\{t\}}}$,
 i.e. the set of minimal markings in which $t$ is not dead.
One standard possibility for computing
$\calS_t$
is to use a backward algorithm:
\begin{enumerate}
	\item		
Start with the set
$\calS_0=\minelem{\{M\mid M\trans{t}\}}$ (hence $\calS_0$ is a
singleton).
\item
For $i=0,1,2,\dots$ compute 
\begin{center}
$S_{i+1}=\minelem{\calS_i\cup \{M\mid
M\trans{t}M'\geq M''$ where $t\in T$ and $M''\in \calS_i\}}$
\end{center}
until $\calS_{i+1}=\calS_i$.
\end{enumerate}
Termination is clear by Dickson's Lemma, and $\calS_{i}=\calS_{i+1}$
obviously implies that 
$\calS_i=\calS_t$.
(Studies in a more general framework can be found, e.g., 
in~\cite{DBLP:journals/iandc/AbdullaCJT00,DBLP:journals/tcs/FinkelS01}.)

Having computed the sets 
$\calS_t=\minelem{\Nat^P\smallsetminus \calD_{\{t\}}}$ for all $t\in
T'$, we can easily compute the set 
$\calS_{T'}=\minelem{\Nat^P\smallsetminus \calD_{T'}}$ since
\begin{center}
	$\calS_{T'}=\minelem{\{M\in\Nat^P\mid (\forall t\in T')(\exists
	M'\in\calS_{t}): M\geq M'\}}$.
\end{center}
\qed
\end{proof}

\emph{Remark.}
Generally we must count with at least exponential-space algorithms for
constructing  $\maxelem{\addlim{\calD_{T'}}}$
(or $\minelem{\Nat^P\smallsetminus \calD_{T'}}$), due to Lipton's 
$\EXPSPACE$-hardness
construction that also applies to the coverability (besides
the reachability). On the other hand, 
by Rackoff's results~\cite{DBLP:journals/tcs/Rackoff78}, the numbers
in $\minelem{\Nat^P\smallsetminus \calD_{T'}}$ (and thus also the finite
numbers in $\maxelem{\addlim{\calD_{T'}}}$) are at most doubly-exponential
w.r.t. the input size, and thus fit in exponential space.
Nevertheless, the precise complexity of computing $\maxelem{\addlim{\calD_{T'}}}$
is not important in our context.

\subsubsection*{Sets of ``live'' markings are more complicated.}
Assuming $N=(P,T,W)$, for $T'\subseteq T$ we define 
\begin{center}
	$\calL_{T'}=\{M\in\Nat^P \mid T'$ is live in $M\}$. 
\end{center}
The set	$\calL_{T'}$ is not the complement of $\calD_{T'}$, but it is
obvious that $T'$ is live in $M$ iff there is no $M'$ reachable from
$M$ in which some $t\in T'$ is dead:
\begin{proposition}\label{prop:charlive}
	$M\in \calL_{T'}$ iff $\rset{M}\cap \calD_{T'}=\emptyset$.
\end{proposition}	

We note that $\calL_{T'}$ is not upward closed in general. 
We have already observed this on the net in Fig.~\ref{fig:expnet},
where $\calD_T=\{(0,x,0)\mid x\in\Nat\}\cup \{(x,0,0)\mid x\in\Nat\}$
(i.e., $\maxelem{\addlim{\calD_T}}=\{(0,\omega,0), (\omega,0,0)\}$).
It is not difficult to verify that in this net we have 
\begin{equation}\label{eq:LT}
\calL_{T}=\{\,M\in\Nat^{\{p_1,p_2,p_3\}} \mid
	M(p_2){+}M(p_3)\geq 1\textnormal{ and }
	M(p_1){+}M(p_3)
\textnormal{ is odd}\,\}. 
\end{equation}

Prop.~\ref{prop:charlive} has the following simple corollary:

\begin{proposition}\label{prop:fullnonlive}
	The answer to an instance $N=(P,T,W), T'$ of $\pslp$ (the partial
	structural liveness problem) is 
\begin{enumerate}	
\item
	YES 
	if $\calL_{T'}\neq\emptyset$, i.e., if
$\{M\in \Nat^P; \rset{M}\cap
\calD_{T'}\neq\emptyset\}\neq\Nat^P$.
\item
NO if $\calL_{T'}=\emptyset$, i.e., if
$\{M\in \Nat^P; \rset{M}\cap
\calD_{T'}\neq\emptyset\}=\Nat^P$.
	\end{enumerate}
\end{proposition}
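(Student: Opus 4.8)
The plan is to observe that this proposition is a direct definitional corollary of Proposition~\ref{prop:charlive}, so essentially no argument beyond unfolding definitions is needed. First I would recall that, by the definition of $\pslp$, the answer to the instance $N,T'$ is YES exactly when there is a marking $M$ in which $T'$ is live, which is precisely the condition $\calL_{T'}\neq\emptyset$; consequently the answer is NO exactly when $\calL_{T'}=\emptyset$. Since these two cases are mutually exclusive and jointly exhaustive, it suffices to re-express the conditions $\calL_{T'}\neq\emptyset$ and $\calL_{T'}=\emptyset$ in the reachability form stated in the proposition.

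For this I would invoke Proposition~\ref{prop:charlive}, which states that $M\in\calL_{T'}$ iff $\rset{M}\cap\calD_{T'}=\emptyset$. This per-marking equivalence says exactly that $\calL_{T'}$ is the complement in $\Nat^P$ of the set $\{M\in\Nat^P\mid \rset{M}\cap\calD_{T'}\neq\emptyset\}$. Hence $\calL_{T'}\neq\emptyset$ holds iff that set is a proper subset of $\Nat^P$, which is the condition in part~1; and dually $\calL_{T'}=\emptyset$ holds iff that set equals all of $\Nat^P$, which is the condition in part~2. This would complete the reduction, and with it the proof.

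The only point requiring a little care --- rather than a genuine obstacle --- is the correct handling of the quantifier when passing to statements about all of $\Nat^P$: the assertion ``there exists a live marking'' negates to ``not every marking $M$ satisfies $\rset{M}\cap\calD_{T'}\neq\emptyset$'', i.e. the displayed set is properly contained in $\Nat^P$, and symmetrically for the empty case. Since Proposition~\ref{prop:charlive} already supplies the per-marking equivalence, both set-level statements follow immediately and there is nothing further to establish.
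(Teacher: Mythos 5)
Your proposal is correct and matches the paper's reasoning exactly: the paper presents this proposition as an immediate corollary of Proposition~\ref{prop:charlive} with no further argument, and your unfolding of the definitions (that $\calL_{T'}$ is the complement of $\{M\in\Nat^P\mid \rset{M}\cap\calD_{T'}\neq\emptyset\}$, so nonemptiness of the former is proper containment of the latter) is precisely the intended justification.
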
	
It turns out important for us that in the case $2$ (NO) the set 
$\{M\in \Nat^P; \rset{M}\cap
\calD_{T'}\neq\emptyset\}$ is semilinear. We now recall the relevant
facts, and then give a proof of Theorem~\ref{th:decidSLP}.

\subsubsection*{Semilinear sets.}

For a fixed (dimension) $d\in\Nat$, a \emph{set} $\linset\subseteq\Nat^d$ 
is \emph{linear}
if there is a (basic) vector $\rho\in\Nat^d$ and (period) vectors 
$\pi_1,\pi_2,\dots,\pi_k\in\Nat^d$ (for some $k\in\Nat$) such that 
\begin{center}
$\linset=\{\,\rho+x_1\pi_1+x_2\pi_2+\cdots +x_k\pi_k\mid
x_1,x_2,\dots,x_k\in\Nat\,\}$.
\end{center}
Such vectors $\rho, \pi_1,\pi_2,\dots,\pi_k$ constitute a \emph{description} of
the set $\linset$.

A \emph{set} $\semilinset\subseteq\Nat^d$ is \emph{semilinear} if
it is the union of finitely many linear sets;
a \emph{description} of $\semilinset$ is a
collection of descriptions of $\linset_i$, $i=1,2,\dots,m$  (for some
$m\in\Nat$),
where 
$\semilinset=\linset_1\cup\linset_2\cup\cdots\cup\linset_m$ 
and $\linset_i$ are linear.

It is well known that an equivalent formalism for describing semilinear sets are
Presburger formulas~\cite{GinsburgSpanier1966},
the arithmetic formulas that can use addition but no multiplication
(of variables); we also recall that the truth of (closed) Presburger formulas is
decidable.  E.g., all downward (or upward) closed sets
$\calD\subseteq\Nat^P$ are semilinear, 
and also the above sets~(\ref{eq:rsetoddeven}) and~(\ref{eq:LT}) are
examples of semilinear sets.

It is also well known that the reachability sets $\rset{M}$ 
are not semilinear in general; 
similarly the sets $\calL_{T'}$ (of live markings) are not semilinear
in general.
(We give an example in Section~\ref{sec:AddRem}.)
But we have the following result by 
Leroux~\cite{DBLP:conf/lics/Leroux13}; it is again an
important theorem in Petri net theory 
that we call a ``lemma'' in our context (since it is
an ingredient for proving Theorem~\ref{th:decidSLP}).

\begin{lemma}\label{lem:semilinconstruct}~\cite{DBLP:conf/lics/Leroux13}
There is an algorithm that, given a marked net $(N,M_0)$,
is guaranteed to halt if the reachability set $\rset{M_0}$ 
is semilinear, in which case it produces a (finite)
description of this set.
\end{lemma}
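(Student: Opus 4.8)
The plan is to present Leroux's algorithm in an enumerate-and-verify style and to isolate the one genuinely hard verification step. Since every semilinear set has a finite description and these descriptions (equivalently, Presburger formulas) can be recursively enumerated, the algorithm dovetails over all semilinear candidates $\semilinset\subseteq\Nat^P$ and, for each, attempts to verify that $\semilinset=\rset{M_0}$; it halts and outputs the first description that passes. Termination in the semilinear case is then immediate, provided the verification is a \emph{positive} semi-decision procedure, i.e. one that halts with success exactly when the candidate is correct: if $\rset{M_0}$ is semilinear, then $\rset{M_0}$ itself occurs among the enumerated candidates and will eventually be confirmed, at which point its description is produced. Thus the whole burden is to show that ``$\semilinset=\rset{M_0}$'' is positively semi-decidable.

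I would split this equality into its two inclusions. The inclusion $\rset{M_0}\subseteq\semilinset$ is in fact fully decidable: it holds as soon as $\semilinset$ is a forward-inductive invariant containing $M_0$, that is, $M_0\in\semilinset$ and, for every transition $t$, $M\in\semilinset$ together with $M\trans{t}M'$ implies $M'\in\semilinset$. Because the one-step relation $M\trans{t}M'$ is defined by the linear constraints $M(p)\geq W(p,t)$ and $M'(p)=M(p)-W(p,t)+W(t,p)$, this inductiveness condition, conjoined with $M_0\in\semilinset$, is a single closed Presburger sentence, whose truth is decidable; and any set passing the test contains $\rset{M_0}$, since $\rset{M_0}$ is the least forward-inductive invariant containing $M_0$. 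The remaining, and decisive, inclusion is $\semilinset\subseteq\rset{M_0}$: every marking described by $\semilinset$ must be reachable from $M_0$.

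For $\semilinset\subseteq\rset{M_0}$, a single invocation of decidability of reachability (Lemma~\ref{lem:reachdecid}) settles membership of one marking, but $\semilinset$ is infinite, so pointwise testing can only \emph{refute} the inclusion, never confirm it. The idea I would follow is to certify the inclusion by a finite witness built per linear piece $\{\rho+x_1\pi_1+\cdots+x_k\pi_k\mid x_i\in\Nat\}$ of $\semilinset$: a run $M_0\trans{u}\rho$ witnessing reachability of the base (decidable by $\rp$), together with, for each period $\pi_j$, a ``pumping'' witness --- a reachable configuration carrying a cyclic run whose displacement is $\pi_j$ --- which by the monotonicity of Petri nets can be iterated and, suitably composed across the $\pi_j$, sweeps out the entire linear piece inside $\rset{M_0}$. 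Such witnesses are finite objects and can be searched for by enumeration; halting with success when one is found yields the positive semi-decision, and combined with the decidable test of the first paragraph it makes ``$\semilinset=\rset{M_0}$'' positively semi-decidable.

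The main obstacle --- and the substantive content of Leroux's result~\cite{DBLP:conf/lics/Leroux13} --- is precisely the guarantee that these finite pumping witnesses \emph{exist} whenever $\semilinset\subseteq\rset{M_0}$ holds, and that the period cycles can be realized simultaneously from a common base without the interference that monotonicity alone does not rule out. This is where the geometry of Petri-net reachability sets must enter: one needs the structure theorem that $\rset{M_0}$ is ``almost semilinear'' and the associated production relation, which organizes the increments $\pi_j$ into genuinely composable cycles. Establishing this structural theorem, rather than the enumerate-and-verify wrapper around it, is the hard core of the proof; granting it, the positive semi-decidability of both inclusions, and hence the claimed algorithm, follow.
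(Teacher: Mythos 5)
You should first note the status of this statement in the paper: it is \emph{not} proved there at all, but imported from Leroux (LiCS 2013), accompanied only by an informal sketch of the algorithm --- accelerate the reachability graph, producing descriptions of increasing semilinear subsets of $\rset{M_0}$ until one is closed under all steps $\trans{t}$ (an inductive invariant), at which point it equals $\rset{M_0}$; the guarantee that this happens when $\rset{M_0}$ is semilinear is exactly Leroux's deep contribution. Your enumerate-and-verify wrapper is a sound alternative packaging and is in fact close to Leroux's own proof architecture (Presburger inductive invariants as over-approximation certificates, flat under-approximation witnesses), whereas the paper's sketch builds only under-approximations and tests them for inductiveness. Two caveats. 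First, your claim that $\rset{M_0}\subseteq\semilinset$ is ``fully decidable'' because it holds as soon as $\semilinset$ is inductive conflates the inclusion with a sufficient condition for it (a non-inductive $\semilinset$ can contain $\rset{M_0}$); this is harmless, since the candidate $\semilinset=\rset{M_0}$ does pass the inductive test, which is all the dovetailing needs. Second, and more substantively, you state the deferred core too strongly: you ask for pumping witnesses to exist \emph{whenever} $\semilinset\subseteq\rset{M_0}$, i.e. for every semilinear subset of the reachability set; that general statement is not what Leroux proves and is not needed. Termination only requires a witness for the single candidate $\semilinset=\rset{M_0}$, and for that case Leroux's theorem (a semilinear reachability set is ``flatable'': all of it is reached via runs from a bounded language $w_1^*w_2^*\cdots w_k^*$) supplies it. With that correction your proposal is a correct reduction of the lemma to Leroux's structural theorem --- the same status as the paper's own treatment, namely a citation of the hard core rather than a proof of it, which you are appropriately explicit about.
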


Roughly speaking, the algorithm (of Lemma~\ref{lem:semilinconstruct})
generates the reachability graph for $M_0$ while performing
``accelerations'' whenever possible (which captures repeatings of
some transition sequences by simple formulas); this process creates a sequence of
descriptions of
increasing subsets of the reachability set $\rset{M_0}$ until the
subset is closed under all steps $\trans{t}$ (which can be easily
checked); in this case the subset 
(called an inductive invariant in~\cite{DBLP:conf/lics/Leroux13})
is obviously equal 
to $\rset{M_0}$, and the process is guaranteed to reach such a case 
when $\rset{M_0}$ is semilinear. 
(A consequence highlighted in~\cite{DBLP:conf/lics/Leroux13} is that
in such a case all reachable markings can be reached by sequences of
transitions from a bounded language.)

\subsubsection*{Proof of Theorem~\ref{th:decidSLP}} (decidability of
$\pslp$).

Given $N=(P,T,W)$ and $T'\subseteq T$, we will construct a marked net 
$(N',M'_0)$ where $N'=(P\cup P_{new},T\cup T_{new}, W')$ so that 
we will have:
\begin{enumerate}[a)]
	\item		
		if $\calL_{T'}=\emptyset$ in $N$ (i.e., $T'$ is
		non-live in each marking of $N$) then $\rset{M'_0}$
		is semilinear and the restriction of $\rset{M'_0}$ to
		$P$ is $\Nat^P$;
	\item 
if $\calL_{T'}\neq\emptyset$, then there is $M\in\Nat^P$ such that
the restriction of any $M'\in\rset{M'_0}$ to $P$ is not equal to $M$.
\end{enumerate}
By this construction the proof will be finished, since in the case a)
the algorithm of Lemma~\ref{lem:semilinconstruct} applied to 
$(N',M'_0)$ is guaranteed to finish with a description of
$\rset{M'_0}$ from which it will be clear if the restriction 
of $\rset{M'_0}$ to $P$ is
$\Nat^P$; in the case b) another algorithm will find some $M\in\Nat^P$
for which the respective condition can be verified due to
a standard extension of the decidability of reachability 
(Lemma~\ref{lem:reachdecid}), as our
construction will also make quite transparent.

The constructed $(N',M'_0)$ captures the set
$\{M\in \Nat^P; \rset{M}\cap
\calD_{T'}\neq\emptyset\}$ that is highlighted in
Prop.~\ref{prop:fullnonlive}. The idea is illustrated in
Fig.~\ref{fig:decidpslp}; we create a marked net that first generates
an element of $\calD_{T'}$ and then simulates $N$ in the reverse mode. 

\begin{figure}[ht]
	\begin{center}
	\includegraphics[scale=0.5]{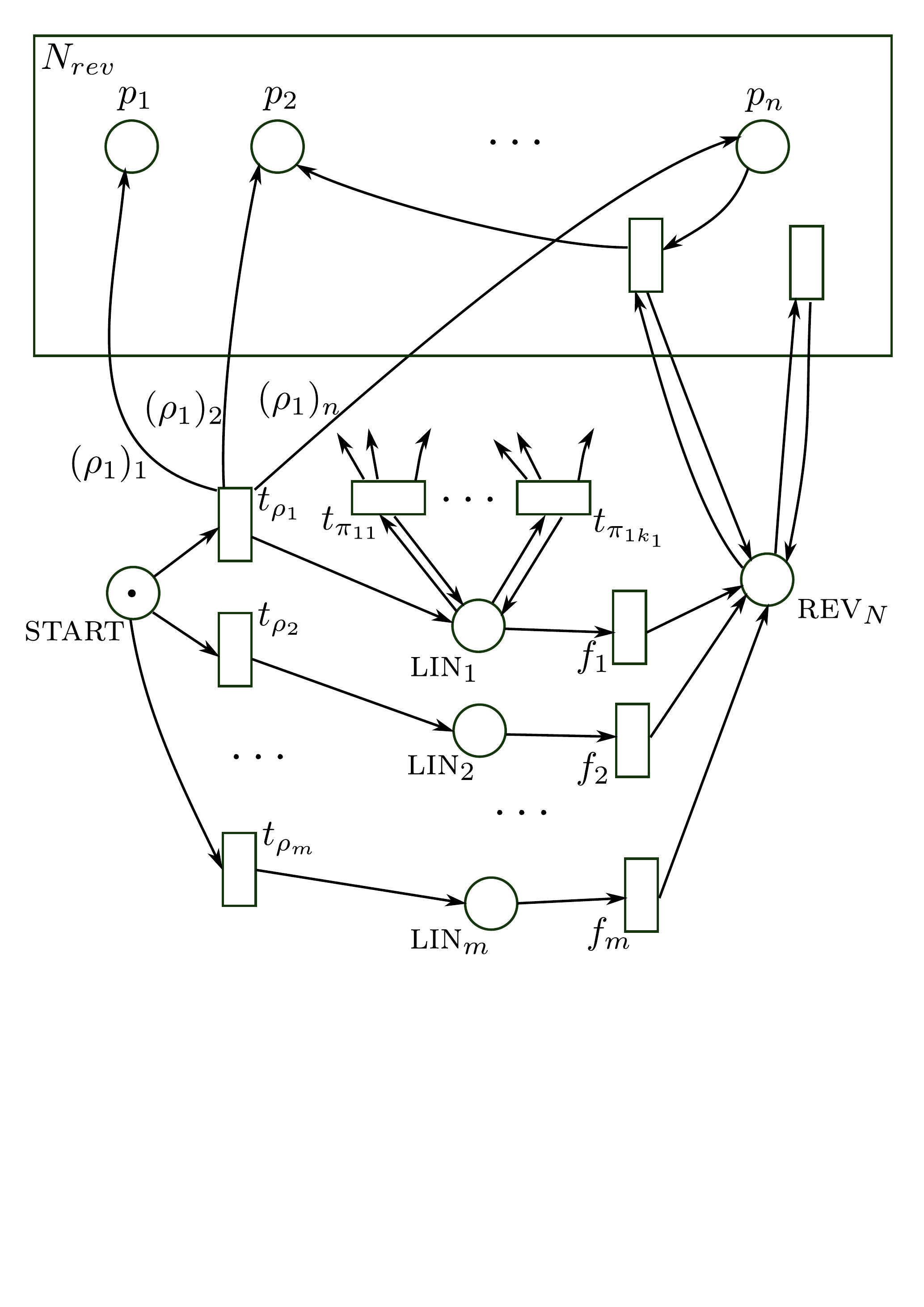}
\vspace{-12em}
\caption{Construction of $(N',M'_0)$ for deciding the (partial) structural
liveness ($\pslp$)}\label{fig:decidpslp}
\end{center}
\end{figure}

More concretely, we assume the ordering $(p_1,p_2,\dots,p_n)$ of the
set $P$ of places in $N$, and compute 
a description of the set $\calD_{T'}$ (recall
Prop.~\ref{prop:maxconstruct}); let
\begin{center}
$\calD_{T'}=\linset_1\cup\linset_2\cup\cdots\cup\linset_m$, 
\end{center}
given by
descriptions $\rho_i,\pi_{i1},\pi_{i2},\dots,\pi_{ik_i}$ of the linear
sets $\linset_i$, for $i=1,2,\dots,m$. 
(We choose this description of $\calD_{T'}$ to make clear that the
construction can be applied to any semilinear set, not only to a
downward closed one.)

The construction of $(N',M'_0)$, where 
$N'=(P\cup P_{new},T\cup T_{new}, W')$, can be now described as follows:
\begin{enumerate}
\item
Given $N=(P,T,W)$, create the ``reversed'' net  $N_{rev}=(P,T,W_{rev})$,
where $W_{rev}(p,t)=W(t,p)$ and  $W_{rev}(t,p)=W(p,t)$ for all $p\in
P$ and $t\in T$.
\\
(By induction on the length of $u$ it is easy to verify 
that $M\trans{u}M'$ in $N$ iff $M'\trans{u_{rev}}M$ in $N_{rev}$,
where $u_{rev}$ is the reversal of $u$.)
\item
To get $N'$, extend $N_{rev}$ as described below; hence 
$W'(p,t)=W_{rev}(p,t)$ and $W'(t,p)=W_{rev}(t,p)$ for all 
$p\in P$ and $t\in T$.
\item
Create the set  $P_{new}$ consisting of the new places
 $\textsc{start}$, $\textsc{lin}_1$, 
$\textsc{lin}_2$, $\dots$, $\textsc{lin}_m$, and $\textsc{rev}_N$,
and the set $T_{new}$ consisting of the 
new transitions $t_{\rho_i}$, $f_i$, and $t_{\pi_{i1}}$, $t_{\pi_{i2}}$,
$\dots$, $t_{\pi_{ik_i}}$ for all $i\in\{1,\dots,m\}$.  
(This is partly depicted in Fig.~\ref{fig:decidpslp}.)
\item
	Put $M'_0(\textsc{start})=1$ and $M'_0(p)=0$ for all
other places $p\in P\cup P_{new}$. 
\item
	For each $i\in\{1,\dots,m\}$, put $W'(\textsc{start},t_{\rho_i})=
W'(t_{\rho_i},\textsc{lin}_i)=1$, and 
$W'(t_{\rho_i},p_j)=(\rho_i)_j$ for all $j\in\{1,\dots,n\}$, where 
$(\rho_i)_j$ is the $j$-th component of the vector $\rho_i\in\Nat^n$.
(We tacitly assume that the value of $W'$ is $0$ for  
the pairs $(p,t)$ and $(t,p)$ that are not mentioned.)
\item
For each  
$t_{\pi_{i\ell}}$ ($i\in\{1,\dots,m\}$, $\ell\in\{1,\dots,k_i\}$)
put  $W'(\textsc{lin}_i,t_{\pi_{i\ell}})=
W'(t_{\pi_{i\ell}},\textsc{lin}_i)=1$,	
and $W'(t_{\pi_{i\ell}},p_j)=(\pi_{i\ell})_j$ for all
$j\in\{1,\dots,n\}$.
\item
	For each $f_i$ put
	$W'(\textsc{lin}_i,f_i)=W'(f_i,\textsc{rev}_N)=1$.
\item
For each transition $t\in T$ in $N_{rev}$ put
	$W'(\textsc{rev}_N,t)=W'(t,\textsc{rev}_N)=1$.
\end{enumerate}		
For the resulting $(N',M'_0)$, we obviously have that there is always
precisely one token on $P_{new}$; i.e., the set $\rset{M'_0}$ can be
expressed as the union
\begin{center}
$\rset{M'_0}=\calS_{\textsc{start}}
\cup\calS_{\textsc{lin}_1}\cup\cdots\cup\calS_{\textsc{lin}_m}\cup
\calS_{\textsc{rev}_N}$
\end{center}
of the disjoint sets $\calS_p=\{M\mid M\in\rset{M'_0}$ and $M(p)=1\}$ (for
$p\in\{\textsc{start},\textsc{lin}_1,\dots,\textsc{lin}_m,\textsc{rev}_N\}$).
The sets $\calS_{\textsc{start}}$,
$\calS_{\textsc{lin}_1}$, $\dots$, $\calS_{\textsc{lin}_m}$ are
obviously semilinear, and it is also clear that 
the restriction of $\calS_{\textsc{rev}_N}$ to $P=\{p_1,p_2,\dots,p_n\}$ is the set
$\{M\in \Nat^P; \rset{M}\cap
\calD_{T'}\neq\emptyset\}$ where  $\rset{M}$ refers to the net $N$. 

Hence the constructed $(N',M'_0)$ indeed satisfies the above conditions a)
and b)
(since $\calL_{T'}=\emptyset$ iff
$\{M\in \Nat^P; \rset{M}\cap
\calD_{T'}\neq\emptyset\}=\Nat^P$). To verify b), it suffices to find
a marking $M$ of $N'$ that satisfies 
$M(\textsc{rev}_N)=1$,
$M(\textsc{start})=M(\textsc{lin}_1)=\cdots=M(\textsc{lin}_m)=0$
and that is not reachable from $M'_0$.

\medskip

\emph{Remark.}
For establishing the non-reachability of $M$ from $M'_0$ we can use an
algorithm guaranteed by 
the decidability of reachability (Lemma~\ref{lem:reachdecid}).
Another option, due to another result of
Leroux (see, e.g.,~\cite{DBLP:conf/birthday/Leroux12}), is to find
a description of a semilinear set that contains $M'_0$, does not
contain $M$, and is
closed w.r.t. all steps $\trans{t}$ (being thus an inductive invariant
in the terminology of~\cite{DBLP:conf/birthday/Leroux12}).

\section{Additional remarks}\label{sec:AddRem}

\subsubsection*{Sets of live markings can be nonsemilinear.}
In Petri net theory, there are many results that relate liveness 
to specific structural properties of nets. We
can name~\cite{DBLP:conf/apn/BarkaouiP96} as 
an example of a cited paper from this area.
Nevertheless, the general
structural liveness problem is still 
not fully understood; one reason might be the fact that 
\begin{center}
the set of live markings of a given net is not
semilinear in general. 
\end{center}

We give an example.
If the set $\calL_{T}$ of live markings for the net $N=(P,T,W)$ in
Fig.~\ref{fig:nonsemilin}  was semilinear, then also its intersection 
with the set
$\{(x_1,0,1,0,1,x_6)\mid x_1,x_6\in\Nat\}$ would be semilinear (i.e.,
definable by a Presburger formula).
But is is straightforward to verify that the markings in this set are
live if, and only if, $x_6> 2^{x_1}$: any marking $M$ where $p_4$ is
marked (forever), i.e. $M(p_4)\geq 1$, is clearly live, and
 we can get at most $2^{x_1}$ tokens in
$p_5$ as long as $p_4$ is unmarked; if  $x_6\leq 2^{x_1}$, then there
is a reachable 
 marking where all transitions are dead, but otherwise $p_4$ gets
 necessarily marked.

\begin{figure}[ht]
	\begin{center}
	\includegraphics[scale=0.5]{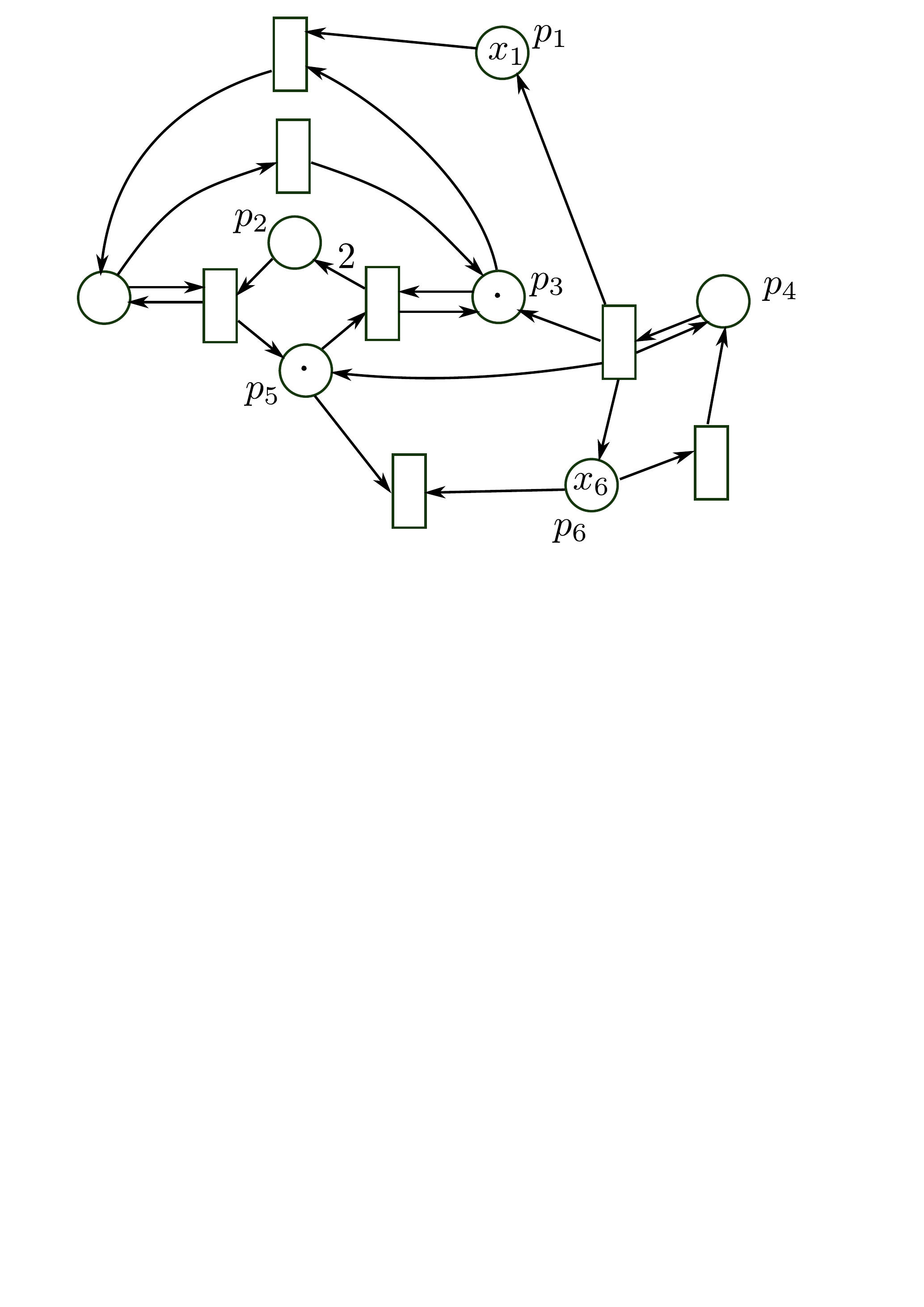}
\vspace{-26em}
\caption{Sets of live markings can be
nonsemilinear}\label{fig:nonsemilin}
\end{center}
\end{figure}

\subsubsection*{Another version of liveness of a set of transitions.}
We have defined a set $T'$ of transitions as live in a marking $M$ if
each $t\in T'$ is live in $M$. Another option is to view
$T'$ as live in $M$ if in each $M'\in\rset{M}$ at least one $t\in T'$
is not dead. But the problem if $T'$ is live in $M$ in this sense
 can be easily reduced to the problem if a specific transition is
 live, and this nuances thus make no substantial difference in our context.

\subsubsection*{Open complexity status.}
We note that it remains to be clarified what we can say about
the complexity of the (partial) structural liveness problem. 
The complexity of the (partial) liveness problem is ``close'' to the
complexity of the reachability (as follows already by the
constructions in~\cite{DBLP:conf/focs/Hack74}), but it seems
natural to expect that the \emph{structural} liveness problem
might be easier. (E.g., the boundedness problem, asking if $\rset{M_0}$
is finite when given $(N,M_0)$, is $\EXPSPACE$-complete, by the results
of Lipton and Rackoff, but the structural boundedness problem
is polynomial; here we ask, given $N$, if $(N,M_0)$ is bounded for all
$M_0$, or in the complementary way, if $(N,M_0)$ is unbounded for some
$M_0$.)

\subsection*{Acknowledgement.}
I would like to thank to Eike Best for 
drawing my attention to
the problem of structural liveness studied in this paper.

\bibliographystyle{splncs03}
\bibliography{sofsem17}

\end{document}

ZALOHA

The following lemma is standard; we just sketch the used ideas.

\begin{lemma}
The reachability problem, the partial liveness problem, and the
liveness problem are effectively inter-reducible. 
\end{lemma}

\begin{proof}
	\begin{itemize}
		\item
			LP is a special case of PLP.
		\item	
	PLP can be reduced to RP: 

	Given  $N,T',M$, we construct $\calD_{T'}$, represented by
$\maxelem{\calD_{T'}}$, and ask if there is $M'\in\rset{M}$ such that
$M'\in \calD_{T'}$. (It is easy to formulate this as an RP instance.)
\item
RP can be reduced to LP: 

RP can be reduced to SPZR (single-place zero-marking reachability).
Let us have an instance $N,M_0,p$ of SPZR.
Add a run-place $r_1$ with one token; it can be moved by a special
transition $t_1$ to $r_2$. If $p$ is nonempty at that moment, a
transition $t_2$ is enabled that puts a token in $r_3$ which is a
run-place for a transition $t_3$ that adds token to all 
places (hence once a token in $r_3$, the net is trivially live).
So if $p$ always has a token in $(N,M_0)$ then the resulting
$(N',M'_0)$ is live; if not, then it is possible to reach a marking
where all transitions are dead.
\end{itemize}
\qed
\end{proof}